\newcommand{\df}{\mathrm{d}}
\theoremstyle{definition} 
\newtheorem{theorem}{Theorem}[section]
\newtheorem{lemma}[theorem]{Lemma}
\newtheorem{corollary}[theorem]{Corollary}
\newtheorem{proposition}[theorem]{Proposition}
\newtheorem{remark}[theorem]{Remark}
\begin{document}

\title{\LARGE \textbf{Autocovariance and Optimal Design for Random Walk Metropolis--Hastings Algorithm}}
\IEEEoverridecommandlockouts

\author{
  Jingyi Zhang and
  James C. Spall\IEEEauthorrefmark{1}\thanks{Corresponding author: James C. Spall (email: James.Spall@jhuapl.edu)}
}

\maketitle

\begin{abstract}
The Metropolis–Hastings algorithm has been extensively studied in the estimation and simulation literature, with most prior work focusing on convergence rates and asymptotic theory. In contrast, the covariance structure of the resulting Markov chain—an important determinant of statistical efficiency—has received comparatively little theoretical attention. In this work, we study the covariance properties of Metropolis–Hastings algorithms in both scalar and high-dimensional settings. For the scalar case, we focus on symmetric unimodal target distributions with symmetric random-walk proposals, and prove positivity of the unit-lag covariance, together with explicit structural characterizations. Within this class, we further establish an optimal proposal design that minimizes covariance and improves sampling efficiency. In the high-dimensional regime, we relate the covariance matrix to the classical 0.23 average acceptance-rate tuning criterion, providing a new covariance-based perspective on optimal scaling.
\end{abstract}

\begin{IEEEkeywords}
Autocovariance, autocorrelation, Markov chain Monte Carlo (MCMC)
\end{IEEEkeywords}
\IEEEoverridecommandlockouts
\IEEEpeerreviewmaketitle

\section{Introduction}
Markov chain Monte Carlo (MCMC) provides a general framework for sampling from complex distributions by constructing a Markov chain with the desired stationary distribution \cite{Spall2003}. This approach enables efficient approximation of expectations and integrals that are otherwise analytically intractable, and it has found widespread applications in statistics, engineering, and control \cite{HillandSpall}. 
In particular, Monte Carlo particle filters can discretize the state space to enable efficient MAP sequence estimation via dynamic programming \cite{Godsill2001}. MCMC has also been applied to Bayesian inference for stochastic kinetic models by sampling joint posteriors of parameters and latent data \cite{Golightly2005}, and to system identification for Bayesian parameter estimation in complex dynamical models beyond the reach of traditional techniques \cite{Ninness2010}. 
In the control domain, particle MCMC has been applied to infer latent trajectories and nonparametric dynamics in Gaussian process state-space models \cite{Frigola2013}, MCMC has been used to generate model uncertainty sets in nonlinear dynamical systems via Hamiltonian Monte Carlo sampling of high-dimensional transfer operators \cite{Srinivasan}, and ghost sampling, an MCMC-based technique, has been introduced to efficiently simulate power disturbances conditional on rate-of-change-of-frequency violations in power systems \cite{Moriarty}; moreover, MCMC has also been integrated with Bernstein approximation techniques to handle non-affine and dependent chance constrained optimization problems, providing an efficient approach to high-dimensional integration \cite{Zhao}.

The Metropolis-–Hastings algorithm is one of the foundational methods in MCMC, with widespread applications in model estimation, physics, and system identification \cite{HillandSpall,Metropolis1953}. The theoretical study of MH has been extensive, with a large body of work devoted to convergence behavior and asymptotic properties of estimators \cite{Tierney1994, RobertsandRosenthal, BrooksandGareth1998}. 
An important line of research in MCMC has focused on the 
scaling and efficiency of Metropolis-Hastings algorithms in high-dimensional settings. For Gaussian random 
walk Metropolis--Hastings (RWMH) algorithms, the efficiency of the chain is maximized when the 
average acceptance rate is tuned to approximately $0.23$ \cite{RobertsandGelman}. This ``$0.23$ rule'' has since 
become a widely used guideline for practitioners, providing a concrete and interpretable 
criterion for tuning proposal variances in high-dimensional problems.

While convergence and asymptotics have been well developed, comparatively little is known about the covariance structure, or equivalently the autocorrelation, of the MH chain itself. 
MCMC methods are closely connected with statistical estimation, providing a means to approximate otherwise intractable quantities. 
Within this framework, the covariance structure of the samples plays a central role in determining estimation accuracy and efficiency. Further knowledge of the covariances of the samples relates strategies for variance reduction, and potentially guides practical implementation of MCMC algorithms.

This work aims to address this gap. 
We first study the scalar case of MH with symmetric random walk proposals, focusing on symmetric unimodal target distributions. In this setting, we establish new theoretical results on the covariance structure and propose an optimal design for the proposal distribution of symmetric random walk form. In addition, we present some general results beyond this restrictive setting. For the high-dimensional case, 
under the same setup as Roberts, Gelman and Gilks \cite{RobertsandGelman}, 
we relate the unit-lag covariance matrix to the classical 0.23 average acceptance rate criterion, providing new theoretical connections between covariance properties and established tuning principles.

Before beginning the technical analysis, we emphasize that Metropolis-Hastings chains are not necessarily positively correlated. 
As a counterexample, consider the target density $\pi(x)=\mathcal{N}(0,1)$ and the proposal density 
$q(y|x)=\mathcal{N}(-c x,2)$ with $c\in(0,1)$. 
In this setting, proposed moves tend to push the chain in the opposite direction of the current state, 
so that $x_t>0$ typically leads to $x_{t+1}<0$ and vice versa. 
Since the covariance reduces to $\mathrm{Cov}(X_t,X_{t+1})=\mathbb{E}[X_t X_{t+1}]$ under the symmetry of $\pi$, 
the systematic flipping effect yields negative correlation. Both theoretical analysis and numerical evaluations confirm that the covariance is strictly negative for all $c\in(0,1)$.

The paper is organized as follows. Section~II presents the preliminary background, notation, and problem setup. Section~III contains the main theoretical results. Section~IV discusses the optimal proposal design. Section~V provides numerical experiments to assess whether the proposed design can improve the efficiency of the Metropolis-Hastings algorithm. Section~VI concludes with remarks and directions for future work.

\section{Preliminaries}
Monte Carlo methods provide a general tool for estimation and simulation when analytical solutions are unavailable. 
MCMC extends this idea by constructing a Markov chain with the desired stationary distribution, enabling sampling from complex or high-dimensional models \cite{Spall2003,HillandSpall}. 
Within this framework, the Metropolis-Hastings (MH) algorithm stands out as a versatile and widely used scheme. In this section, we present the Metropolis-Hastings algorithm, establish the notation, and outline the problem setup.

\subsection{Metropolis-Hastings algorithm}
Assume a continuous state space with probability density functions. The overall aim of MCMC methods such as Metropolis-Hastings algorithm is to generate samples from a target density of interest, say \(\pi(x)\). The MH algorithm is defined by a proposal density $q(y|x)$ for transitioning from state $x$ to state $y$, and a target density $\pi$ of interest.
The MH algorithm accepts a proposed move with probability
\[
\alpha(x,y) = \min\!\left(1, \frac{\pi(y)q(x|y)}{\pi(x)q(y|x)}\right),
\]
and otherwise the chain remains at $x$. 
This induces a Markov transition kernel
\[
\mathcal{K}(x,y) = \alpha(x,y) q(y|x) + \delta(y-x)\big(1-r(x)\big),
\]
where $\delta(\cdot)$ is the Dirac measure at $0$, and
\(
r(x) = \int \alpha(x,y)\,q(y|x)\,\df y
\)
is the overall acceptance probability at current state \cite{HillandSpall}. Under ergodicity and irreducibility assumptions, 
the Markov chain converges to the unique invariant distribution. 
Let $(\mathcal{X},\mathcal{B})$ denote the underlying measurable state space, 
with $\mathcal{X}$ the state space and $\mathcal{B}$ its associated $\sigma$-algebra, then
\[
\pi(y)=\int_{\mathcal{X}}\mathcal{K}(x,y)\,\pi(dx)
\quad \pi\text{-a.s. on }(\mathcal{X},\mathcal{B}).
\]
where ``$\pi$-a.s.'' indicates that the equality holds for all $y$ except on a $\pi$-null set (i.e. almost surely).

The above formulation specifies the acceptance mechanism and transition kernel in mathematical terms. 
For clarity, we now summarize the MH algorithm in a step-by-step procedure.

\begin{enumerate}[leftmargin=3.7em,label=\textbf{Step \arabic*.}]
  \item \textbf{Initialization.} Choose the initial state $X_0 \in \mathcal{X}$ and set $t=0$.
  \item Given $X_t$, generate a candidate $Y$ according to the proposal density $q(\cdot | X_t)$.
  \item Compute the acceptance probability
  \[
    \alpha(X_t,Y) = \min\!\left(1, \frac{\pi(Y)q(X_t | Y)}{\pi(X_t)q(Y | X_t)}\right).
  \]
  Then set
  \[
    X_{t+1} =
      \begin{cases}
        Y, & \text{with probability } \alpha(X_t,Y), \\[6pt]
        X_t, & \text{with probability } 1-\alpha(X_t,Y).
      \end{cases}
  \]
  Increment $t \gets t+1$.
  \item Repeat Steps 2-3 until the terminal iteration $n$ is reached, yielding the chain $(X_0,\dots,X_n)$.
\end{enumerate}

Our aim in this paper is to analyze the autocovariance structure of the sequence \(\{X_0,X_1,X_2,\ldots,X_t\}\) produced by the MH algorithm. We focus on the larget \(t\) (post-``burn-in'') where the process has achieved a stationary probability distribution.

\subsection{Notation}
Let $X_t$ denote the current state. We consider the continuous state spaces, 
where $X_{t+k} \in \mathbb{R}$. 
The target distribution $\pi(\cdot)$ is assumed to have mean $\mu$ and strictly positive finite variance $\sigma_{\pi}^2 \in (0,\infty)$.
 The proposal distribution follows a random walk of the form \(X_{t+1} = X_t + Z, \, Z \sim \phi(\cdot)\), where $\phi(\cdot)$ denotes a density symmetric about 0, not restricted to the Gaussian case.

Throughout this paper we assume that the target distribution $\pi$ admits a density 
with respect to Lebesgue measure. 
Similarly, the proposal distribution $q(\cdot|x)$ is assumed to admit a density $q(y|x)$ 
with respect to the corresponding base measure. $\Pi(\cdot)$ is the cumulative distribution function corresponding to $\pi(\cdot)$, and we will use the cumulative distribution function (CDF) notation for brevity in what follows. Unless otherwise stated, all integrals are taken over $\mathbb{R}$ with respect to Lebesgue measure. 
We also adopt the standard stochastic order notation: 
$Z_n=o_p(a_n)$ means $Z_n/a_n \to 0$ in probability, 
while $Z_n=O_p(a_n)$ means $Z_n/a_n$ is bounded in probability. Also for simplicity, let $\partial_x f(x)$ and $\partial_x^2 f(x)$ be the first and second derivatives of a function $f(x)$ with respect to its argument. Lastly, for a matrix $A$, $\|A\|_{\mathrm{op}}
:= \max\{\,\|Ax\|_2 : x\in\mathbb{R}^n,\ \|x\|_2=1\,\}$ denotes its operator (spectral) norm.

\subsection{Problem Setup}
Consider the unit-lag covariance of the output sequence of the MH algorithm under the stationarity assumption, 
i.e., the current state $X_t \sim \pi(\cdot)$, which implies that the next state $X_{t+1} \sim \pi(\cdot)$ as well. 
Stationarity holds provided that the chain is irreducible, aperiodic, and satisfies detailed balance with respect to $\pi(\cdot)$, 
ensuring that $\pi(\cdot)$ is the unique invariant distribution \cite{HillandSpall}; these conditions are known to hold for an MH algorithm.

Our goal is to investigate the multi-lag covariance of a symmetric random walk MH chain, 
with particular interest in the structure of the covariance and the optimal design of the proposal distribution. 
We begin with the unit-lag covariance, $\mathrm{Cov}(X_t, X_{t+1})$.

\section{Main results}
In this section, we present our main theoretical results, deriving an analytic expression for the unit-lag covariance of a Metropolis-Hastings chain with a symmetric random-walk proposal and a symmetric unimodal target. This serves as the foundation for the optimal proposal design studied in Section~IV.
\begin{lemma}\label{covariance formula}
    
For a Metropolis-Hastings chain, the general formula for the unit-lag covariance is given by
\begin{align}\label{general formula}
        &\mathrm{Cov}(X_t,X_{t+1})\nonumber\\
        &=\sigma_{\pi}^2-\tfrac12\iint(x-y)^2\pi(x)q(y|x)\alpha(x,y)\,\df x\,\df y.
\end{align}
Recall that, unless otherwise specified, all integrals are taken over $\mathbb{R}$.
\end{lemma}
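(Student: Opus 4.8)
The plan is to start from the identity $\mathrm{Cov}(X_t,X_{t+1})=\mathbb{E}[X_tX_{t+1}]-\mu^2$, which is valid because stationarity forces both $X_t$ and $X_{t+1}$ to be distributed according to $\pi$ with common mean $\mu$. The whole argument then reduces to evaluating $\mathbb{E}[X_tX_{t+1}]$, and the decisive move will be to exploit the detailed balance symmetry of the MH kernel to symmetrize the resulting double integral.

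First I would condition on the current state $X_t=x$ and the proposal $Y=y\sim q(\cdot\mid x)$. The MH update rule gives
\[
\mathbb{E}[X_{t+1}\mid X_t=x,\,Y=y]=y\,\alpha(x,y)+x\bigl(1-\alpha(x,y)\bigr)=x+(y-x)\alpha(x,y),
\]
since the chain moves to $y$ with probability $\alpha(x,y)$ and otherwise stays at $x$. Multiplying by $x$ and integrating against $\pi(x)q(y\mid x)$ yields
\[
\mathbb{E}[X_tX_{t+1}]=\iint\bigl[x^2+x(y-x)\alpha(x,y)\bigr]\pi(x)q(y\mid x)\,\df x\,\df y.
\]
The first term integrates to $\mathbb{E}[X_t^2]=\sigma_{\pi}^2+\mu^2$ because $\int q(y\mid x)\,\df y=1$, so after subtracting $\mu^2$ I obtain $\mathrm{Cov}(X_t,X_{t+1})=\sigma_{\pi}^2+I$, where $I:=\iint x(y-x)\alpha(x,y)\pi(x)q(y\mid x)\,\df x\,\df y$.

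The key step, which I expect to be the crux of the argument, is to recognize that the weight $w(x,y):=\pi(x)q(y\mid x)\alpha(x,y)$ is invariant under the interchange $x\leftrightarrow y$: this is exactly the detailed balance identity $\pi(x)q(y\mid x)\alpha(x,y)=\pi(y)q(x\mid y)\alpha(y,x)$ that makes $\pi$ invariant for the MH kernel, and it holds directly from the definition of $\alpha$ for the symmetric random-walk proposal at hand. Granting this symmetry, relabeling the dummy variables in $I$ and averaging the two representations gives
\[
2I=\iint\bigl[x(y-x)+y(x-y)\bigr]w(x,y)\,\df x\,\df y=-\iint(x-y)^2\,w(x,y)\,\df x\,\df y,
\]
using the elementary identity $2xy-x^2-y^2=-(x-y)^2$. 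Substituting $I=-\tfrac12\iint(x-y)^2\pi(x)q(y\mid x)\alpha(x,y)\,\df x\,\df y$ into $\mathrm{Cov}(X_t,X_{t+1})=\sigma_{\pi}^2+I$ then produces the claimed formula. The only point requiring care is the finiteness of the integrals, so that the splitting and the variable relabeling are legitimate; this follows from the standing assumption $\sigma_{\pi}^2<\infty$ together with the boundedness $0\le\alpha\le1$.
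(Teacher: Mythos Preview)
Your argument is correct and is essentially the paper's own proof: both condition on $X_t$ to write $\mathbb{E}[X_tX_{t+1}]=\mathbb{E}[X_t^2]+\iint x(y-x)\alpha(x,y)\pi(x)q(y\mid x)\,\df x\,\df y$ and then invoke the detailed-balance symmetry $\pi(x)q(y\mid x)\alpha(x,y)=\pi(y)q(x\mid y)\alpha(y,x)$ to collapse the cross term to $-\tfrac12\iint(x-y)^2\cdots$. The only cosmetic differences are that the paper conditions on $X_t$ alone rather than on $(X_t,Y)$ and justifies the interchange via Tonelli applied to the nonnegative integrand $(x-y)^2\pi(x)q(y\mid x)\alpha(x,y)$; note also that detailed balance holds for \emph{every} MH chain by construction of $\alpha$, not only for the symmetric random-walk proposal, so the lemma is as general as stated.
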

\begin{proof} Using tower property,
\begin{align*}
    \mathrm{Cov}(X_t,X_{t+1})&=\mathbb{E}[X_tX_{t+1}]-\mathbb{E}[X_t]\mathbb{E}[X_{t+1}]\\
    &=\mathbb{E}[\mathbb{E}[X_{t+1}|X_t]\cdot X_{t}]-\mathbb{E}[X_t]\mathbb{E}[X_{t+1}].  
\end{align*}
By stationariy, we know \(\mathbb{E}[X_t]=\mathbb{E}[X_{t+1}]=\mu\). Then computing \(\mathbb{E}[X_{t+1}|X_t=x]\),
\begin{align*}
\mathbb{E}[X_{t+1}|X_t=x]&=\int y\cdot\mathcal{K}(x,y)\,\df y\\
&=\int y\cdot\alpha(x,y)q(y|x)\,\df y+x\cdot(1-r(x))\\
&=x+\int(y-x)\alpha(x,y)q(y|x)\,\df y.
\end{align*}
Then by reversibility,\,i.e.\,\(\pi(x)q(y|x)\alpha(x,y)=\pi(y)q(x|y)\alpha(y,x)\),
\begin{align}\label{first term of cov}
    &\mathbb{E}[\mathbb{E}[X_{t+1}|X_t]\cdot X_{t}]\nonumber\\
    &=\iint x(y-x)\alpha(x,y)q(y|x)\pi(x)\,\df y\,\df x+\int x^2\pi(x)\,\df x\nonumber\\
    &=-\frac12\iint(x-y)^2\alpha(x,y)q(y|x)\pi(x)\,\df y\,\df x+\int x^2\pi(x)\,\df x.
\end{align}
Applying the Tonelli's theorem \cite{Tonelli} to justify the interchange of integrals in \eqref{first term of cov} since the integrand \((x-y)^2\alpha(x,y)q(y|x)\pi(x)\) is nonnegative, and using stationarity, we obtain \eqref{general formula}.
\end{proof}

If we further suppose the proposal density is of random walk form (not necessarily symmetric), \eqref{general formula} simplifies to
\begin{align}\label{formula for rw}
        &\mathrm{Cov}(X_t,X_{t+1})\nonumber\\
        &=\sigma_{\pi}^2-\frac12\iint(x-y)^2\phi(y-x)\pi(x)\alpha(x,y)\,\df x\,\df y
\end{align}
since the proposal density \(q(\cdot|x)\) is of random-walk form, i.e., \(Y=X+Z\) with \(Z\sim\phi(\cdot)\), we have 
\(q(y|x)=\phi(y-x)\) and \(q(x|y)=\phi(x-y)\). Substituting these expressions into \eqref{general formula} yields the desired result.

Moreover, when the proposal density is of symmetric random-walk form (not necessarily Gaussian), 
that is, $q(y|x)=q(x|y)=\phi(|x-y|)$, equation~\eqref{general formula} takes the following form
\begin{align}\label{formula for srw}
&\mathrm{Cov}(X_t,X_{t+1})\nonumber\\
&= \sigma_{\pi}^2
-\tfrac{1}{2}\iint (x-y)^2 \phi(y-x)\min\{\pi(x),\pi(y)\}\,\df x\,\df y.    
\end{align}

Lemma~\ref{covariance formula} expresses the unit-lag covariance in the quadratic form $(x-y)^2$, which naturally connects to random-walk proposals and enables explicit analysis. Building on this, Corollary~\ref{invariance} follows when the proposal is further assumed to be of random-walk form.

\begin{corollary}\label{invariance}Suppose the proposal density is random-walk (not necessarily symmetric), then the unit-lag covariance is invariant under translation of the target density. Define $\pi_c(x):=\pi(x-c),\,c\in\mathbb{R}$, then
\[
\mathrm{Cov}_{\pi_c}(X_t,X_{t+1})=\mathrm{Cov}_{\pi}(X_t,X_{t+1}).
\]
\begin{proof}
Note that the variance is invariant under translation, i.e.,
$\sigma^2_{\pi_c} = \sigma^2_\pi$.
And recalling that the proposal $q(\cdot|x)$ is of the random walk form, $q(y|x)=\phi(y-x)$ and \(q(x|y)=\phi(x-y)\). Thus by definition, the acceptance rate under the shifted target density \(\pi_c(\cdot)\) is
\begin{align*}
    \alpha_c(x,y)
    &= \min\bigg\{1,\frac{\pi(y-c)\,\phi(x-y)}{\pi(x-c)\,\phi(y-x)}\bigg\}.
\end{align*}
Recall from \eqref{general formula} that 
$\mathrm{Cov}_{\pi_c}(X_t,X_{t+1})$ is 
\begin{align*}
    &\mathrm{Cov}_{\pi_c}(X_t,X_{t+1})\\
    &= \sigma_{\pi}^2 - \tfrac{1}{2}\iint (x-y)^2 \phi(x-y)\pi_c(x)\alpha_c(x,y)\,\df x\,\df y \\
    &=\sigma_{\pi}^2 - \tfrac{1}{2}\iint (x-y)^2 \phi(x-y)\pi(x-c)\alpha(x-c,y-c)\,\df x\,\df y\\
    &=  \sigma_{\pi}^2 - \tfrac{1}{2}\iint (s-t)^2 \phi(s-t)\pi(s)\alpha(s,t)\,\df s\,\df t\\&=\mathrm{Cov}_{\pi}(X_t,X_{t+1})
\end{align*}
where we applied the change of variables $x-c=s$, $y-c=t$, noting that the variance is invariant under translation. 
Hence we may simply denote $\mathrm{Cov}_{\pi_c}(X_t,X_{t+1})$ by $\mathrm{Cov}_{\pi}(X_t,X_{t+1})$.
\end{proof}    
\end{corollary}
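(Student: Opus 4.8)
The plan is to start from the random-walk specialization \eqref{formula for rw} of the covariance formula and to track how each factor transforms when the target $\pi$ is replaced by its translate $\pi_c(\cdot)=\pi(\cdot-c)$. Because variance is invariant under a location shift, the leading term $\sigma_\pi^2$ is unchanged, so the entire burden falls on the double integral $\tfrac12\iint (x-y)^2\phi(y-x)\pi(x)\alpha(x,y)\,\df x\,\df y$, and it suffices to show that this integral is the same for $\pi$ and for $\pi_c$.

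The first substantive step is to compute the acceptance probability under the shifted target. Since the proposal is of random-walk form, $q(y|x)=\phi(y-x)$ depends only on the displacement $y-x$, so the proposal factors in the Metropolis ratio are insensitive to a common shift of both arguments. Writing $\alpha_c(x,y)=\min\{1,\pi_c(y)\phi(x-y)/(\pi_c(x)\phi(y-x))\}$ and using $\pi_c(\cdot)=\pi(\cdot-c)$ together with the identity $\phi((x-c)-(y-c))=\phi(x-y)$, I would establish the clean relation $\alpha_c(x,y)=\alpha(x-c,y-c)$. This is the algebraic heart of the argument.

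Next I would insert both $\pi_c(x)=\pi(x-c)$ and $\alpha_c(x,y)=\alpha(x-c,y-c)$ into the double integral and apply the change of variables $s=x-c$, $t=y-c$. The quadratic weight $(x-y)^2$ and the proposal kernel $\phi(y-x)$ are functions of the difference $x-y=s-t$ alone, and the Jacobian of the translation is unity, so the integral collapses back to $\tfrac12\iint (s-t)^2\phi(t-s)\pi(s)\alpha(s,t)\,\df s\,\df t$, which is exactly the integral appearing in $\mathrm{Cov}_\pi(X_t,X_{t+1})$. Combined with $\sigma_{\pi_c}^2=\sigma_\pi^2$, this gives $\mathrm{Cov}_{\pi_c}(X_t,X_{t+1})=\mathrm{Cov}_{\pi}(X_t,X_{t+1})$.

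The computation is routine once the right bookkeeping is in place; the one point deserving care is the observation that the displacement-only dependence of the random-walk proposal is precisely what makes $\alpha_c(x,y)=\alpha(x-c,y-c)$ hold exactly. I expect this to be the main conceptual step, since for a general (non-random-walk) proposal the factors $q(x|y)$ and $q(y|x)$ would not absorb the common shift, and the claimed translation invariance could then fail. Thus the proof also implicitly pinpoints exactly where the random-walk hypothesis is used.
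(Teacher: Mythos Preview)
Your proposal is correct and follows essentially the same route as the paper: identify $\alpha_c(x,y)=\alpha(x-c,y-c)$ from the displacement-only dependence of the random-walk proposal, substitute into the covariance formula~\eqref{formula for rw}, and apply the translation $s=x-c$, $t=y-c$ together with $\sigma_{\pi_c}^2=\sigma_\pi^2$. Your write-up is in fact slightly more explicit than the paper's in noting the unit Jacobian and in pinpointing where the random-walk hypothesis is indispensable.
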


The argument in Corollary~\ref{invariance} extends naturally to higher lags. That is, for lag $k$, the $k$-step transition kernel remains invariant 
under translations of the target density $\pi$, 
so the covariance is also invariant under translations of the target density. Corollary~\ref{invariance} also allows us, without loss of generality, to assume a zero mean ($\mu=0$) when considering symmetric unimodal target densities later.

\begin{lemma}\label{leq 0.5}
Let \( \phi(x) \) be any probability density on $\mathbb{R}$ that is symmetric and unimodal about 0.
Then,
\[
    |x|\phi(x) < \frac{1}{2} \quad \text{a.e.}
\]
\end{lemma}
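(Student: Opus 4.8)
The plan is to reduce the claim to the positive half-line and then exploit the one structural feature that makes it true, namely that a symmetric density is non-increasing on $[0,\infty)$ once it is unimodal; this monotonicity is precisely what turns the normalization $\int\phi=1$ into a pointwise bound. I would first clear the trivial cases: at $x=0$ we have $|x|\phi(x)=0<\tfrac12$, and since $|{-x}|\,\phi(-x)=|x|\,\phi(x)$ by symmetry, it suffices to handle $x>0$.

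For $x>0$ the estimate is a two-step chain. Monotonicity gives $\phi(t)\ge\phi(x)$ for all $t\in[0,x]$, and symmetry places half the mass on the positive axis, so that
\begin{align*}
|x|\,\phi(x)=\int_0^x\phi(x)\,\df t\le\int_0^x\phi(t)\,\df t\le\int_0^\infty\phi(t)\,\df t=\tfrac12 .
\end{align*}
This already delivers the non-strict bound $|x|\,\phi(x)\le\tfrac12$ everywhere.

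The step I expect to be the real obstacle is upgrading $\le$ to the strict inequality in the almost-sure sense. Equality in the first inequality forces $\phi$ to be constant (equal to $\phi(x)$) on $[0,x]$, and equality in the second forces $\phi=0$ a.e.\ on $(x,\infty)$; together these say $\phi=\tfrac{1}{2x}\mathbf{1}_{[-x,x]}$, i.e.\ $\phi$ is the uniform law on $[-x,x]$. I would then show this can hold for at most one $x=x_0>0$: if it held at $x_1<x_2$, constancy on $[0,x_2]$ would give $\phi(x_1)=\phi(x_2)$, while the two equalities give $\phi(x_i)=\tfrac{1}{2x_i}$, whence $\tfrac{1}{2x_1}=\tfrac{1}{2x_2}$, a contradiction. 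Hence $\{x:|x|\phi(x)=\tfrac12\}$ consists of at most the two points $\pm x_0$, is Lebesgue-null, and in particular carries no $\phi$-mass, giving $|x|\phi(x)<\tfrac12$ almost surely.

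Finally, I would flag that the argument genuinely uses unimodality and not symmetry alone: normalization by itself is too weak, since a thin off-center spike of width $w$ and height $\tfrac1{2w}$ keeps unit mass yet makes $|x|\phi(x)$ as large as one likes. The hypothesis is therefore to be understood as symmetric \emph{and} unimodal, which is exactly the setting of the random-walk proposals considered here.
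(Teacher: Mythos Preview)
Your argument is correct and, once unimodality is in place, runs along the same line as the paper's: both compare the constant value $\phi(x_0)$ against the integral of $\phi$ over $[-|x_0|,|x_0|]$ (the paper by contradiction, you via a direct chain of inequalities), deduce that equality forces $\phi$ to be the uniform law on that interval, and then observe that equality can occur only at the two endpoints, a Lebesgue-null set.

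Your final paragraph is the substantive point, and it is right. The lemma as stated in the paper---symmetry alone---is false, exactly by the spike counterexample you describe. The paper's own proof silently invokes unimodality at the step ``$\phi(x_0)\ge\tfrac{1}{2|x_0|}$, hence $\int_{-|x_0|}^{|x_0|}\phi(x)\,\df x\ge 1$'': knowing $\phi$ only at the two points $\pm x_0$ cannot by itself bound the integral below, and the inequality is precisely the monotonicity estimate $\phi(t)\ge\phi(x_0)$ for $|t|\le|x_0|$. So you have not merely reproduced the paper's argument but also located and repaired a genuine gap in it; the hypothesis should read ``symmetric and unimodal'', and the downstream Theorem~\ref{main theorem} inherits this extra assumption on the proposal density $\phi$.
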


\begin{proof}
Suppose that the inequality \( |x|\phi(x) < 1/2 \) fails on a set of positive measure.
This implies that the set is non-empty, so there exists at least one point \( x_0 \neq 0 \) such that
\[
|x_0|\phi(x_0) \geq \frac{1}{2}.
\]
By symmetry and unimodality, \( \phi(|x|) \) is non-increasing for \( x \in \mathbb{R} \). Consequently, for any \( x \) in the interval \( (-|x_0|, |x_0|) \), we must have \( \phi(x) \geq \phi(x_0) \). Integrating \( \phi \) over this interval yields
\[
\int_{-|x_0|}^{|x_0|} \phi(x)\,dx \geq \int_{-|x_0|}^{|x_0|} \phi(x_0)\,dx = 2|x_0|\phi(x_0) \geq 1.
\]
Since \( \phi \) is a probability density function, we know that \( \int_{\mathbb{R}} \phi(x)\,dx = 1 \). The inequality above implies that all the probability mass must be concentrated within \( [-|x_0|, |x_0|] \), and furthermore, that \( \phi(x) \) must be constant on this interval to avoid the integral strictly exceeding 1.

Thus, \( \phi \) must be the uniform distribution on \( [-|x_0|, |x_0|] \). The condition \( |x|\phi(x) \geq 1/2 \) holds only at the boundaries \( x = \pm x_0 \), which constitute a set of measure zero. This contradicts the original assumption.
\end{proof}

Lemma~\ref{leq 0.5} is powerful in that it holds for any symmetric unimodal probability density, 
including standard distributions such as the normal $\mathcal{N}(0,1)$, 
for which $\max_{x}\{|x|\phi(x)\}= e^{-\frac12}/{\sqrt{2\pi}} < 1/2$. Its general validity makes it an essential tool for the proof of our main theorem.

\begin{theorem}\label{sym unimodal}
Consider a Metropolis-Hastings chain with a symmetric random-walk proposal density 
and a symmetric unimodal target density with mean $\mu$. 
Let the centered cumulative distribution function be defined by $\Pi_\mu(x):=\Pi(x+\mu)$. 
Then the unit-lag covariance admits the explicit representation
\begin{align}\label{formula sym unimodal}
\mathrm{Cov}(X_t,X_{t+1})
= 4 \int_0^\infty x \,[1-\Pi_\mu(x)]\,[1-4x\phi(2x)]\,\df x.
\end{align}
\end{theorem}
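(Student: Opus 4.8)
The plan is to start from the symmetric random-walk specialization~\eqref{formula for srw} and collapse its double integral down to a one-dimensional integral over the positive half-line. First I would invoke Corollary~\ref{invariance} to assume without loss of generality that $\mu=0$, so that $\pi$ is symmetric and unimodal about the origin and $\Pi_\mu=\Pi$; the general-$\mu$ statement then follows by the translation invariance already established. With $\mu=0$, unimodality means $\pi(x)=g(|x|)$ for a non-increasing $g$, so $\min\{\pi(x),\pi(y)\}=\pi(x)$ precisely on $\{|x|>|y|\}$ and equals $\pi(y)$ on $\{|x|<|y|\}$, the diagonal $|x|=|y|$ having measure zero. Since the integrand $(x-y)^2\phi(x-y)\min\{\pi(x),\pi(y)\}$ is symmetric under $(x,y)\mapsto(y,x)$, the two regions contribute equally, and I can rewrite the double integral as $2\iint_{\{|x|>|y|\}}(x-y)^2\phi(x-y)\pi(x)\,\df y\,\df x$.

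For fixed $x$ I would evaluate the inner integral over $y\in(-|x|,|x|)$ via the substitution $w=x-y$; tracking the limits shows $w$ ranges over $[0,2x]$ when $x>0$ and over $[2x,0]$ when $x<0$, so in both cases, using the evenness of $w^2\phi(w)$, the inner integral equals $\int_0^{2|x|}w^2\phi(w)\,\df w$. The evenness of $\pi$ then folds the $x<0$ half onto the $x>0$ half, giving $4\int_0^\infty \pi(x)\big(\int_0^{2x}w^2\phi(w)\,\df w\big)\df x$. Because the integrand is nonnegative, Tonelli justifies swapping the order of integration over the triangular region $\{0<w<2x\}$, after which the $x$-integration produces the tail $\int_{w/2}^\infty \pi(x)\,\df x = 1-\Pi(w/2)$, and the single substitution $w=2x$ converts the whole quantity into $32\int_0^\infty x^2\phi(2x)[1-\Pi(x)]\,\df x$.

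It remains to recast the leading $\sigma_\pi^2$ in the same tail form. I would integrate $\int_0^\infty x[1-\Pi(x)]\,\df x$ by parts (taking $v=x^2/2$), using finite variance to kill the boundary term $x^2[1-\Pi(x)]\to 0$, obtaining $\tfrac14\sigma_\pi^2$; hence $\sigma_\pi^2=4\int_0^\infty x[1-\Pi(x)]\,\df x$. Subtracting half of the double integral and factoring the common $4x[1-\Pi(x)]$ (noting $16x^2=4x\cdot 4x$) collapses the two pieces into $4\int_0^\infty x[1-\Pi(x)][1-4x\phi(2x)]\,\df x$, which is exactly~\eqref{formula sym unimodal}.

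The main obstacle is the region bookkeeping: resolving the $\min$ correctly through unimodality, exploiting the $(x,y)$-swap symmetry without double counting, and—most error-prone—tracking the signed integration limits of the $w=x-y$ substitution separately for $x>0$ and $x<0$ before recombining. Everything else is routine; I note that Lemma~\ref{leq 0.5} applied at argument $2x$ gives $4x\phi(2x)<1$, which guarantees the bracketed factor $1-4x\phi(2x)$ is positive and lets us read the identity as a strict reduction of $\sigma_\pi^2$.
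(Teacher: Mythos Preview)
Your argument is correct and lands on the same intermediate expression $\sigma_\pi^2-16\int_0^\infty x^2\phi(2x)[1-\Pi(x)]\,\df x$ as the paper, followed by the identical tail-integral representation of $\sigma_\pi^2$. The organisation differs slightly: the paper first changes variables to the step $u=x-y$ and then evaluates the ``overlap'' function $f(u)=\int\min\{\pi(u+y),\pi(y)\}\,\df y$ in one stroke as $2[1-\Pi(u/2)]$ by splitting at $y=-u/2$, whereas you stay in the original $(x,y)$ coordinates, use the $(x,y)\mapsto(y,x)$ symmetry to restrict to $\{|x|>|y|\}$, and recover the same tail via Tonelli after the $w=x-y$ substitution. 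The paper's overlap-function route is marginally cleaner and sidesteps the sign-tracking you flag as error-prone; your region-splitting route is more hands-on but makes the role of unimodality (through $\min\{\pi(x),\pi(y)\}=\pi(\text{argmax}|\cdot|)$) completely explicit. Either way the content is the same.
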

\begin{proof}
First define
\begin{align*}
f(x):=\int \min\{\pi(x+y),\pi(y)\}\,\df y.
\end{align*}
By Corollary~\ref{invariance}, we may assume without loss of generality that $\mu=0$, then consider \(f(-x)\)
\[
\begin{aligned}
f(-x)
&=\int \min\{\pi(-x+y),\pi(y)\}\,\mathrm{d}y
\\
&\overset{y\mapsto -y}{=}\int \min\{\pi(x+y),\pi(-y)\}\,\mathrm{d}y
= f(x)
\end{aligned}
\]
Hence $f$ is even, i.e., $f(x)=f(-x)$, $\forall x\in\mathbb{R}$.

Recall from equation~\eqref{formula for srw} that
\begin{align}\label{even used}
&\mathrm{Cov}(X_t,X_{t+1})\nonumber\\
&= \sigma_{\pi}^2
-\tfrac{1}{2}\iint (x-y)^2 \phi(y-x)\min\{\pi(x),\pi(y)\}\,\df x\,\df y \nonumber\\
&= \sigma_{\pi}^2
-\tfrac{1}{2}\int x^2 \phi(x)\int \min\{\pi(x+y),\pi(y)\}\,\df y\,\df x\nonumber\\
&=\sigma_{\pi}^2
-\int_0^\infty x^2 \phi(x)\int \min\{\pi(x+y),\pi(y)\}\,\df y\,\df x.
\end{align}
where the second equality follows from a change of variables together with Tonelli's theorem \cite{Tonelli} and the third equality follows from the fact that \(f(\cdot)\) is even. 

Given that the target density is symmetric unimodal, now consider \(f(x),\,x\geq 0\)
\begin{align*}
f(x) 
&= \int_{-\infty}^{-x/2} \pi(y)\,\df y \;+\; \int_{-x/2}^{\infty} \pi(x+y)\,\df y, \qquad x\ge 0 \nonumber\\
&= \Pi(-x/2)+1-\Pi(x/2)\nonumber\\
&= 2[1-\Pi(x/2)].
\end{align*}
Hence,
\begin{align}
\mathrm{Cov}(X_t,X_{t+1})\nonumber
&=\sigma_{\pi}^2\label{w.o. var}
-2\int_0^\infty x^2 \phi(x)[1-\Pi(x/2)]\,\df x\nonumber\\
&=\sigma_{\pi}^2
-16\int_0^\infty x^2 \phi(2x)[1-\Pi(x)]\,\df x.
\end{align}
We can also express the target variance $\sigma_{\pi}^2$ via integration by parts as
\begin{align}\label{variance sym uni}
\sigma_{\pi}^2= 2\int_0^\infty x^2 \pi(x)\,\df x
= 4\int_0^\infty x\,[1-\Pi(x)]\,\df x.
\end{align}
Substituting \eqref{variance sym uni} into \eqref{w.o. var} yields \eqref{formula sym unimodal}.

Now we can readily extend the result to the case of any arbitrary symmetric unimodal density with nonzero mean.

By Corollary~\ref{invariance}, proposal densities of symmetric random–walk form \(q(y|x)=\phi(|y-x|)\) are transition-invariant. Then the covariance admits the same form as in the mean–zero case, with \(\Pi\) replaced by \(\Pi_\mu\).
\end{proof}

Having established the necessary preliminaries, we are now ready to present the main theorem.

\begin{theorem}\label{main theorem}
For a Metropolis-Hastings chain with a symmetric random-walk proposal density 
and a symmetric unimodal target density, the unit-lag covariance is strictly positive.
\end{theorem}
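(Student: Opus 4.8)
The plan is to read off strict positivity directly from the explicit representation~\eqref{formula sym unimodal} established in Theorem~\ref{sym unimodal}, by determining the sign of each factor of the integrand over $(0,\infty)$. Writing the integrand as $x\,[1-\Pi_\mu(x)]\,[1-4x\phi(2x)]$, I would first show that all three factors are nonnegative on the domain of integration, so that the integral is at least nonnegative, and then upgrade this to a strict inequality by exhibiting a set of positive Lebesgue measure on which the integrand is strictly positive.

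The factor $x$ is strictly positive on $(0,\infty)$, and the survival factor $1-\Pi_\mu(x)$ is nonnegative everywhere since $\Pi_\mu$ is a cumulative distribution function. For the third factor, I would invoke Lemma~\ref{leq 0.5} with the argument $2x$: because $\phi$ is symmetric about $0$, the lemma yields $|2x|\,\phi(2x)<\tfrac12$ almost surely, hence $4x\phi(2x)<1$ for almost every $x>0$, so that $1-4x\phi(2x)>0$ almost everywhere. This is exactly where the generality of Lemma~\ref{leq 0.5}, valid for \emph{any} symmetric density and not merely the Gaussian, becomes essential to the argument.

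The main obstacle is converting nonnegativity of the integrand into strict positivity of the integral, which requires a set of positive measure on which all three factors are simultaneously strictly positive. I would resolve this by localizing near $x=0^+$: by symmetry of the target about its mean, $1-\Pi_\mu(0)=\tfrac12>0$, and since $\pi$ admits a density with respect to Lebesgue measure, $\Pi_\mu$ is continuous, so $1-\Pi_\mu(x)>0$ on some interval $(0,\varepsilon)$. On that interval $x>0$ holds trivially and $1-4x\phi(2x)>0$ holds a.e., so the integrand is strictly positive on a set of positive measure. The standing hypothesis $\sigma_\pi^2\in(0,\infty)$ rules out a degenerate point-mass target and guarantees that such an interval exists. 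Combining these observations, the integral in~\eqref{formula sym unimodal} is strictly positive, and therefore $\mathrm{Cov}(X_t,X_{t+1})>0$.
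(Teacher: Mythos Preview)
Your proof is correct and follows the same route as the paper: apply the representation~\eqref{formula sym unimodal} from Theorem~\ref{sym unimodal} and use Lemma~\ref{leq 0.5} to bound the factor $1-4x\phi(2x)$ strictly positive almost everywhere, together with the non-degeneracy assumption $\sigma_\pi^2>0$. You are more explicit than the paper in justifying the upgrade from nonnegativity to strict positivity of the integral, but the underlying argument is identical.
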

\begin{proof}
Combining Lemma~\ref{leq 0.5} and Theorem~\ref{sym unimodal} yields
\[
\mathrm{Cov}(X_t,X_{t+1})
= 4 \int_0^\infty x \,[1-\Pi_\mu(x)]\,[1-4x\phi(2x)]\,\df x>0
\]
This conclusion holds under our standing assumption that the target distribution is non-degenerate,\,i.e.\,\(\sigma_{\pi}^2>0\).
\end{proof}
We emphasize that MH chains are not necessarily positively correlated. 
If the proposal density deviates from the symmetric random-walk form, 
then even under a symmetric unimodal target distribution, 
the resulting chain can exhibit strictly negative unit-lag correlation, 
as illustrated by the counterexample mentioned in Section~I.
\section{Optimal Proposal Design}
Based on Theorem~\ref{sym unimodal}, we can carry out an optimal proposal design (i.e.\,\,choose the best density \(q(y|x)\)). We define the optimal proposal design as the choice of a symmetric density $\phi$ 
associated with the proposal $q(y|x)=\phi(y-x)$ that minimizes the unit-lag covariance of the Metropolis-Hastings chain under a symmetric unimodal target. This criterion is motivated by the fact that the unit-lag covariance is the leading term in the asymptotic variance expansion of ergodic averages, thereby directly governing estimation efficiency.
Specifically, recall \eqref{formula sym unimodal}
\begin{align}\label{recall prop}
&\mathrm{Cov}(X_t,X_{t+1})\nonumber\nonumber\\
&= 4 \int_0^\infty x \,[1-\Pi_\mu(x)]\,[1-4x\phi(2x)]\,\df x \nonumber\\
&= 4 \int_0^\infty x \,[1-\Pi_\mu(x)]\,\df x
   - 16 \int_0^\infty x^2 \phi(2x)[1-\Pi_\mu(x)]\,\df x.
\end{align}

Only the second term depends on $\phi(\cdot)$, and this dependence is linear. 
Thus, minimizing $\mathrm{Cov}(X_t,X_{t+1})$ is equivalent to maximizing the second term. If we do not impose the standard regularity condition—namely that the support of the target is contained in the support of the proposal, 
$\mathrm{Supp}[\pi(\cdot)] \subseteq \mathrm{Supp}[q(\cdot|x)]$, which ensures the chain is irreducible — 
then the optimal $\phi(\cdot)$ takes the form of a two-point measure, as occurs when the first-order condition admits a unique solution (true for log-concave symmetric unimodal densities)
\[
\phi^*(x) = \tfrac{1}{2}\,\delta(x-x^*) + \tfrac{1}{2}\,\delta(x+x^*),
\]
where \(\delta(\cdot)\) is the Dirac measure at 0, $x^*=2y^*$ and
\[
y^* \in \arg\max_{y\ge 0}\, y^2[1-\Pi_\mu(y)].
\]
Note that $\arg\max$ may define a set, though for symmetric unimodal target densities it commonly reduces to a unique maximizer.
Define \[w(y):=y^2[1-\Pi(y)].\] We can show the following
\begin{align}\label{dont blow up}
    \lim_{y\to\infty}w(y)=\lim_{y\to\infty} y^{2}\bigl(1-\Pi(y)\bigr)=0.
\end{align}
\begin{proof}[Proof of \eqref{dont blow up}]
We have by Markov's inequality that
\[
2x^2[1-\Pi(x)] = x^2\cdot\mathbb{P}(|X| > x)
\le \mathbb{E}[X^{2}; |X| > x].
\]
it suffices to show that $\mathbb{E}[X^{2}; |X| > x] \to 0$ as $x \to \infty$.
Note that $X^{2}\mathbf{1}_{\{|X|>x\}} \to 0$ almost surely and
\(
0 \le X^{2}\mathbf{1}_{\{|X|>x\}} \le X^{2},
\)
with $\mathbb{E}[X^{2}] < \infty$. Hence, by the Dominated Convergence Theorem,
\(
\mathbb{E}[X^{2}; |X| > x] \to 0
\)
\end{proof}

From \eqref{dont blow up}, and noting that $w(0)=0$, $\lim_{y\to\infty}w(y)=0$, and $w(\cdot)$ is continuous, the Extreme Value Theorem guarantees that both the maximum and the minimum (zero) exist. Hence $w_{\max}:=w(y^*)$ must exist for any non-degenerate target distribution with finite variance.

Equivalently, $y^*$ can be determined by the first-order condition
\begin{align}\label{first order condition}
2[1-\Pi_\mu(y)] = y\pi_\mu(y),
\end{align}
with the additional requirement that the second derivative at $y^*$ is negative to ensure local optimality. 

\begin{remark}
By Corollary~\ref{invariance}, we may assume $\mu = 0$ without loss of generality. Equation \eqref{first order condition} has a unique solution when the target density $\pi(\cdot)$ is log-concave.  
In this case, define $g(y):=\pi(y)/(1-\Pi(y))$, which is monotonically increasing, implying that $h(y)=y g(y)$ is strictly increasing. Also note that \(\lim\limits_{y\to+\infty} h(y)
\ge
\lim\limits_{y\to+\infty} y\cdot g(0)
= \infty\)
Hence, the equation $h(y)=2$ admits a solution, and that solution is unique.
\end{remark}

However, under the regularity condition, the optimization problem becomes a supremum problem rather than a maximization problem. In this case, the local maximum achieved by the two-point measure cannot be attained, since attaining it would violate the regularity condition. Proposition~\ref{cant attain} makes this precise by showing that,
when $\phi$ is required to be a probability density satisfying the
support condition needed for irreducibility, the supremum cannot be
attained, although it can be approximated arbitrarily closely.

\begin{proposition}\label{cant attain}
Recall the second term in equation~\eqref{recall prop}. Let $w(y):=y^2[1-\Pi_\mu(y)]$ for $y\ge 0$ and define
\[
\mathcal{J}(\phi):=4\int_{0}^{\infty} w(s)\,\phi(2s)\,\df s,
\]
for any symmetric proposal density $\phi$ about $0$. 
Let $w_{\max}=\sup_{y\ge 0} w(y)$ and 
$\mathcal{M}=\{y\ge 0:\,w(y)=w_{\max}\}$, which is nonempty. 
If $\phi$ is allowed to be a symmetric probability measure, 
the supremum $\sup_\phi\mathcal{J}(\phi)=w_{\max}$ is attained by putting all mass at $\pm 2y$ with $y\in\mathcal{M}$. 
If $\phi$ is required to be a continuous density and $\mathcal{M}$ consists only of isolated points, then the supremum is not attained; 
instead, one can construct a sequence of increasingly concentrated densities around $\{\pm 2y:\,y\in\mathcal{M}\}$ that makes $\mathcal{J}(\phi)$ arbitrarily close to $w_{\max}$.

\end{proposition}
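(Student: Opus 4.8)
The plan is to treat $\mathcal{J}$ as a linear functional of the symmetric proposal and first establish the uniform ceiling $w_{\max}$, after which both claims reduce to characterizing when equality holds. Using symmetry of $\phi$ together with the change of variables $u=2s$, one has $\int_0^\infty \phi(2s)\,\df s=\tfrac12\int_0^\infty\phi(u)\,\df u=\tfrac14$, so the pointwise bound $w(s)\le w_{\max}$ (valid since, as argued before the statement, $w$ is continuous with a finite maximum attained on the nonempty set $\mathcal{M}$) gives
\[
\mathcal{J}(\phi)=4\int_0^\infty w(s)\phi(2s)\,\df s\le 4\,w_{\max}\!\int_0^\infty\phi(2s)\,\df s=w_{\max}.
\]
The same ceiling holds for a symmetric probability measure: writing $\mathcal{J}(\phi)=2\int_0^\infty w(u/2)\,\df\phi(u)$, which agrees with the density form after $u=2s$, and noting $w(0)=0$ together with $\phi((0,\infty))\le\tfrac12$, one again obtains $\mathcal{J}(\phi)\le w_{\max}$.

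For the measure case I would verify attainment by direct substitution of $\phi=\tfrac12\delta_{2y}+\tfrac12\delta_{-2y}$ with $y\in\mathcal{M}$. Only the atom at $u=2y$ lies in $[0,\infty)$ and it carries mass $\tfrac12$, so $\mathcal{J}(\phi)=2\,w(y)\cdot\tfrac12=w(y)=w_{\max}$, meeting the ceiling; hence the supremum is attained within the class of symmetric probability measures.

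The crux, and the step I expect to require the most care, is non-attainment under the continuity restriction. Writing the slack as $w_{\max}-\mathcal{J}(\phi)=4\int_0^\infty[w_{\max}-w(s)]\phi(2s)\,\df s$ with a nonnegative integrand, equality forces $[w_{\max}-w(s)]\phi(2s)=0$ for a.e.\ $s\ge0$. The decisive observation is that a set of isolated points in $\mathbb{R}$ is at most countable, hence Lebesgue-null; thus $\mathcal{M}$ is null, $w(s)<w_{\max}$ for a.e.\ $s$, and equality would demand $\phi(2s)=0$ for a.e.\ $s\ge0$, i.e.\ $\phi=0$ a.e.\ by symmetry, contradicting $\int\phi=1$. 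So no density can attain the bound once $\mathcal{M}$ is null.

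Finally, to show $w_{\max}$ is nevertheless approached, I would fix $y\in\mathcal{M}$ and take $\phi_n$ to be a symmetric continuous density placing mass $\tfrac12$ in a bump of width $1/n$ about each of $\pm 2y$ (for instance scaled tent or Gaussian bumps). Then $\phi_n(2s)$ concentrates at $s=y$ with total mass $\tfrac14$ on $[0,\infty)$, and continuity of $w$ at $y$ together with the bound $w\le w_{\max}$ yields $\mathcal{J}(\phi_n)\to 4\,w(y)\cdot\tfrac14=w_{\max}$ by dominated convergence (or a direct $\varepsilon$-$\delta$ estimate). This establishes $\sup_\phi\mathcal{J}(\phi)=w_{\max}$, not attained but approached arbitrarily closely. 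The only routine bookkeeping is tracking the factor introduced by $u=2s$ and justifying the limit.
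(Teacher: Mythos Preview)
Your proof is correct and follows the same overall architecture as the paper: establish the ceiling $\mathcal{J}(\phi)\le w_{\max}$ via the pointwise bound $w\le w_{\max}$ and the normalization $\int_0^\infty\phi(2s)\,\df s=\tfrac14$, verify attainment for the two-point measure, argue non-attainment for densities, and approximate with concentrated bumps.

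The one place you diverge is the non-attainment step, and your version is the cleaner of the two. The paper argues informally that a continuous density must place positive mass outside any neighborhood of $\mathcal{M}$ and invokes an $\varepsilon$-gap for $w$ there. You instead write the slack as $4\int_0^\infty[w_{\max}-w(s)]\phi(2s)\,\df s$ and observe that equality forces the integrand to vanish a.e.; since isolated points form a countable, hence Lebesgue-null, set, this forces $\phi=0$ a.e., a contradiction. This argument is shorter, fully rigorous, and in fact does not use continuity of $\phi$ at all---it rules out attainment for \emph{any} density once $\mathcal{M}$ is null, which is a slight strengthening of the stated conclusion.
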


\begin{proof}
By symmetry, with $h(y):=2\phi(2y)$ (noting that $\int_0^\infty h(s)\,\df s = 1/2$), we have
\begin{align*}
\mathcal{J}(\phi)=2\int_{0}^{\infty} w(s)h(s)\,\df s
\le 2\sup_{s\ge0}w(s)\!\int_{0}^{\infty}h(s)\,\df s=w_{\max}
\end{align*}
with equality iff $h(s)=0$ for almost every $s \notin \mathcal{M}$.
Thus any symmetric atomic measure fully supported on $\{\pm 2y:y\in\mathcal{M}\}$ achieves $w_{\max}$.

Now consider the case where $\phi$ is a probability density. Then $h$ is absolutely continuous with respect to the Lebesgue measure.
Since $\mathcal{M}$ consists only of isolated points, it has Lebesgue measure zero.
Consequently, any density $h$ must assign positive mass to the set $\mathcal{M}^c = \{y \ge 0 : y \notin \mathcal{M}\}$.
On this set, $w(y) < w_{\max}$ strictly.
Therefore,
\begin{align*}
    \int_{0}^{\infty} w(s)h(s)\,\df s& = \int_{\mathcal{M}^c} w(s)h(s)\,\df s\\ &< w_{\max} \int_{\mathcal{M}^c} h(s)\,\df s = \frac{w_{\max}}{2}
\end{align*}
which implies $\mathcal{J}(\phi) < w_{\max}$. Thus, the supremum is not attained within the class of densities.

As the concentration increases, the mass shifts towards $\mathcal{M}$, and $\mathcal{J}(\phi)$ approaches $w_{\max}$.
Hence $\sup\mathcal{J}=w_{\max}$.
\end{proof}

Thus, when $\phi(\cdot)$ is required to be a probability density (by the regularity condition), minimizing the unit-lag covariance becomes an infimum problem. Nevertheless, the value $w_{\max}:=\sup_{y\ge0}w(y)$, where $w(y)=y^2[1-\Pi(y)]$ and $\Pi$ is the target cumulative distribution function, can be approximated arbitrarily well. For instance, if $x^*=2y^*$ with $y^*\in\arg\max_{y\ge0}w(y)$, then letting the variance parameter $\sigma^2\to0$ in the Gaussian-component bimodal density
\[
\phi(x) \;=\; \tfrac{1}{2}\,\mathcal{N}(x^*,\sigma^2) \;+\; \tfrac{1}{2}\,\mathcal{N}(-x^*,\sigma^2)\to\phi^*\text{ as }\sigma\to0
\]
yields convergence to the two-point measure $\phi$.

Continuing with the analysis of the unit-lag covariance structure, we next relate it to the ``0.23 rule'' for random-walk Gaussian Metropolis algorithms via the unit-lag covariance matrix. This classic criterion arises from diffusion limit analysis in high dimensions \cite{RobertsandGelman} and prescribes tuning the average acceptance probability to about $0.23$ for optimal efficiency. Here we reinterpret this criterion through the lens of the unit-lag covariance matrix.
We now state Theorem~\ref{0.23}, following the notation and the setup of Roberts et al. \cite{RobertsandGelman}.

\begin{theorem}\label{0.23}
Let $d\in\mathbb{N}$ and consider the product target
\[
\pi_d(\boldsymbol{X})=\prod_{i=1}^d p_i(X_i),\qquad \boldsymbol{X}=(X_1,\dots,X_d)^\top\in\mathbb{R}^d,
\]
where each $p_i$ is a strictly positive $C^2$ density on $\mathbb{R}$ with 
$\sigma_i^2 := \mathrm{Var}_{p_i}(X_i) \in (0, \infty)$, 
$m_i := \mathbb{E}_{p_i}\big[ ( (\log p_i)'(X_i) )^2 \big] \in (0, \infty)$, 
and $\mathbb{E}_{p_i}\big[ | (\log p_i)'(X_i) |^3 \big] < \infty$, 
$\mathbb{E}_{p_i}\big[ | (\log p_i)''(X_i) | \big] < \infty$ for all $i$.
Consider a stationary random-walk Metropolis chain with the symmetric random walk Gaussian proposal density,\,i.e.\,\(\boldsymbol{Y}=\boldsymbol{X}+\boldsymbol{Z}\), where $\boldsymbol{Z}\sim \mathcal N\!\big(\boldsymbol{0},\tfrac{\ell^2}{d}\,\boldsymbol{I}_d\big)$, $\text{for some } 0<\ell<\infty$, 
and $\boldsymbol{I}_d$ denotes the $d\times d$ identity matrix.
Define the averages
\(
\bar\sigma_d^2 := \frac{1}{d}\sum_{i=1}^d \sigma_i^2,\,
\bar m_d := \frac{1}{d}\sum_{i=1}^d m_i.
\) Denote by $\bar m := \lim_{d\to\infty}\bar m_d$, assuming the limit exists in $(0,\infty)$.
Then the unit-lag covariance matrix satisfies
\begin{align}\label{cov matrix formula}
&\mathrm{Cov}(\boldsymbol{X}_t,\boldsymbol{X}_{t+1})\nonumber\\
&= \Big(\mathrm{diag}(\sigma_1^2,\dots,\sigma_d^2)
-\frac{\ell^2}{2d}\cdot2\,\Phi\!\big(-\tfrac{\ell\sqrt{\bar m}}{2}\big)\,\boldsymbol{I}_d\Big)
\;+\; \boldsymbol{R}_d
\end{align}
where $\boldsymbol R_d$ is the remainder matrix whose diagonal entries are $o(d^{-1})$ and off-diagonal entries are $O(d^{-2})$, so that $\|\boldsymbol R_d\|_{\mathrm{op}}=O(d^{-1})$, where $\|\cdot\|_{\mathrm{op}}$ denotes the operator (spectral) norm, chosen since it directly controls eigenvalues and hence positive definiteness. In particular, the diagonal entries
\[
\mathrm{Cov}(\boldsymbol{X}_t,\boldsymbol{X}_{t+1})_{ii}
= \sigma_i^2 \;-\; \frac{\ell^2}{2d}\,2\,\Phi\!\Big(-\tfrac{\ell\sqrt{\bar m}}{2}\Big)\;+o(d^{-1})
\]
are strictly positive for all sufficiently large \(d\) and the matrix $\mathrm{Cov}(\boldsymbol{X}_t,\boldsymbol{X}_{t+1})$ is positive definite.
\end{theorem}

\begin{proof}
\noindent By Corollary 1.2 of Roberts et al. \cite{RobertsandGelman} and applying the change of variable \(\boldsymbol{Y}=\boldsymbol{X}+\boldsymbol{Z}\)
\[\lim_{d\to\infty}\mathbb{E}[\alpha(\boldsymbol{X},\boldsymbol{X}+\boldsymbol{Z})] =2\,\Phi\!\big(-\tfrac{\ell\sqrt{\bar m}}{2}\big)\] and \(\boldsymbol{Z}\text{ is independent of }\boldsymbol{X}\)
which implies, for finite large \(d\)
\begin{align}\label{Ealpha}
\mathbb{E}[\alpha(\boldsymbol{X},\boldsymbol{X}+\boldsymbol{Z})] =2\,\Phi\!\big(-\tfrac{\ell\sqrt{\bar m}}{2}\big)+o(1).
\end{align}
Following the same idea as Lemma~\ref{covariance formula}, we can write the unit-lag covariance matrix as
\begin{align}\label{formula high dim}
\mathrm{Cov}(\boldsymbol{X}_t,\boldsymbol{X}_{t+1})
= \boldsymbol \Sigma_{\pi}-\tfrac12\,\mathbb{E}[\alpha(\boldsymbol{X},\boldsymbol{X}+\boldsymbol{Z})\,\boldsymbol{Z}\boldsymbol{Z}^\top]
\end{align}
where $\boldsymbol\Sigma_{\pi}:=\mathbb{E}_{\pi}\!\big[(\boldsymbol{X}-\boldsymbol\mu)(\boldsymbol{X}-\boldsymbol\mu)^\top\big]$ is the covariance matrix of the stationary distribution with mean $\boldsymbol \mu=\mathbb{E}_{\pi}[\boldsymbol{X}]$.

Then fix \(\boldsymbol{X}\) and set $f(\boldsymbol{X,Z}):=\alpha(\boldsymbol{X},\boldsymbol{X}+\boldsymbol{Z})$. We look at \(\mathbb{E}[\alpha(\boldsymbol{X},\boldsymbol{X}+\boldsymbol{Z}) \boldsymbol{Z}\boldsymbol{Z}^\top]\), for $\boldsymbol{Z}\sim\mathcal N(\boldsymbol0,\ell^2 \boldsymbol{I}_d/d)$,
Applying Stein's Lemma \cite{stein} and the assumption \(\mathbb{E}_{p_i}[|(\log p_i)''(X_i)|] < \infty\) gives
\[
\mathbb{E}[Z_i Z_j f(\boldsymbol{X,Z})| \boldsymbol{X}]
=\frac{\ell^4}{d^2}\,\mathbb{E}[\partial_{ij} f(\boldsymbol{X,Z})| \boldsymbol{X}],
\]
\[
\mathbb{E}[Z_i^2 f(\boldsymbol{X,Z})| \boldsymbol{X}]
=\frac{\ell^2}{d}\,\mathbb{E}[f(\boldsymbol{X,Z})| \boldsymbol{X}]+\frac{\ell^4}{d^2}\,\mathbb{E}[\partial_{ii} f(\boldsymbol{X,Z})|\boldsymbol{X} ].
\]
By the tower property, taking full expectation yields
\begin{align}
&\mathbb{E}[\alpha(\boldsymbol{X},\boldsymbol{X}+\boldsymbol{Z}) Z_i Z_j]=O(d^{-2})\label{avg part1}\\
&\mathbb{E}[\alpha(\boldsymbol{X},\boldsymbol{X}+\boldsymbol{Z}) Z_i^2]=\frac{\ell^2}{d}\,\mathbb{E}[\alpha(\boldsymbol{X},\boldsymbol{X}+\boldsymbol{Z})]+O(d^{-2})\label{avg part2}
\end{align}
Note that $\boldsymbol\Sigma_{\pi}=\mathrm{diag}(\sigma_1^2,\dots,\sigma_d^2)$. Using \eqref{avg part1} and\eqref{avg part2},
\begin{align}\label{r_d}
\mathbb{E}\big[\alpha(\boldsymbol X,\boldsymbol X+\boldsymbol Z)\,\boldsymbol Z\boldsymbol Z^\top\big]
=\frac{\ell^2}{d}\,\mathbb{E}[\alpha(\boldsymbol X,\boldsymbol X+\boldsymbol Z)]\,\boldsymbol I_d -2\boldsymbol{R}_d,
\end{align}
also note that 
\[O(d^{-2})+o(d^{-1})=o(d^{-1})\]
so substituting \eqref{Ealpha} and \eqref{r_d} into \eqref{formula high dim} yields \eqref{cov matrix formula} with a remainder matrix
$\|\boldsymbol R_d\|_{\mathrm{op}}=o(d^{-1})$. Now we can write the diagonal entries as
\begin{align}\label{diagonal entries}
\mathrm{Cov}(\boldsymbol{X}_t,\boldsymbol{X}_{t+1})_{ii}
= \sigma_i^2 \;-\; \frac{\ell^2}{2d}\,2\,\Phi\!\Big(-\tfrac{\ell\sqrt{\bar m}}{2}\Big)\;+o(d^{-1})
\end{align}
which is strictly positive for all large $d$ since $\sigma_i^2>0$.

Note that the leading term \(
\boldsymbol\Sigma_{\pi} - \ell^2/d\cdot\Phi\!\Big(-\ell \sqrt{\bar m}/2\Big) \boldsymbol{I}_d\) in \eqref{cov matrix formula}
is a real diagonal matrix, hence Hermitian and the remainder $\boldsymbol R_d$ arises from 
\[
-\tfrac12\Big(\,\mathbb{E}[\alpha(\boldsymbol X,\boldsymbol X+\boldsymbol Z)\,\boldsymbol Z \boldsymbol Z^\top] 
- \tfrac{\ell^2}{d}\,\mathbb{E}[\alpha(\boldsymbol X,\boldsymbol X+\boldsymbol Z)] \boldsymbol I_d\Big)
\]
it follows that $\boldsymbol R_d$ is real symmetric, hence Hermitian as well. Therefore both the main diagonal term and the perturbation matrix \(\boldsymbol{R}_d\) are Hermitian. Then by Weyl’s inequality \cite{Horn_Johnson_2012}, for Hermitian matrices $\boldsymbol A$ and $\boldsymbol E$,
\[
\lambda_{\min}(\boldsymbol A+\boldsymbol E) \;\ge\; \lambda_{\min}(\boldsymbol A) - \|\boldsymbol E\|_{\mathrm{op}}.
\]
In our case $\boldsymbol A$ is the diagonal term and $\boldsymbol E=\boldsymbol R_d$, so
\[
\lambda_{\min}\!\big(\mathrm{Cov}(\boldsymbol X_t,\boldsymbol X_{t+1})\big)
\;\ge\; \min_i\!\Big\{\sigma_i^2 - \tfrac{\ell^2}{d}\cdot\Phi(-\tfrac12 \ell \sqrt{\bar m})\Big\}
- \|\boldsymbol R_d\|_{\mathrm{op}}
\]
Since $\min\limits_i\sigma_i^2>0$, the subtracted terms are of order $ O(d^{-1})$, hence strictly smaller than $\min\limits_i\sigma_i^2$ for sufficiently large $d$. This shows that the covariance matrix is positive definite in high dimension.
\end{proof}
Theorem~\ref{0.23} shows that in the high-dimensional regime, for a product-form target with non-identically distributed components, the diagonal entries of the unit-lag covariance admit the expansion as equation~\eqref{diagonal entries}. Roberts et al. define the following
\[
h(\ell):=\ell^2\,2\Phi\!\Big(-\tfrac12 \ell\sqrt{\bar m}\Big).
\]
Since $h(\ell)$ coincides with the efficiency criterion in the diffusion limit analysis of Roberts et al., the optimizer $\ell^* := \arg\max_{\ell} h(\ell) = 2.38/\sqrt{\bar m}$ yields the well-known optimal acceptance rate $\mathbb{E}[\alpha(\boldsymbol X,\boldsymbol Y)] \approx 0.23$, see \cite{RobertsandGelman}, while simultaneously minimizing the diagonal entries of the unit-lag covariance matrix by Theorem~\ref{0.23}.

\section{Numerical Study}
\begin{figure}[htbp]
  \centering
  \includegraphics[width=\linewidth]{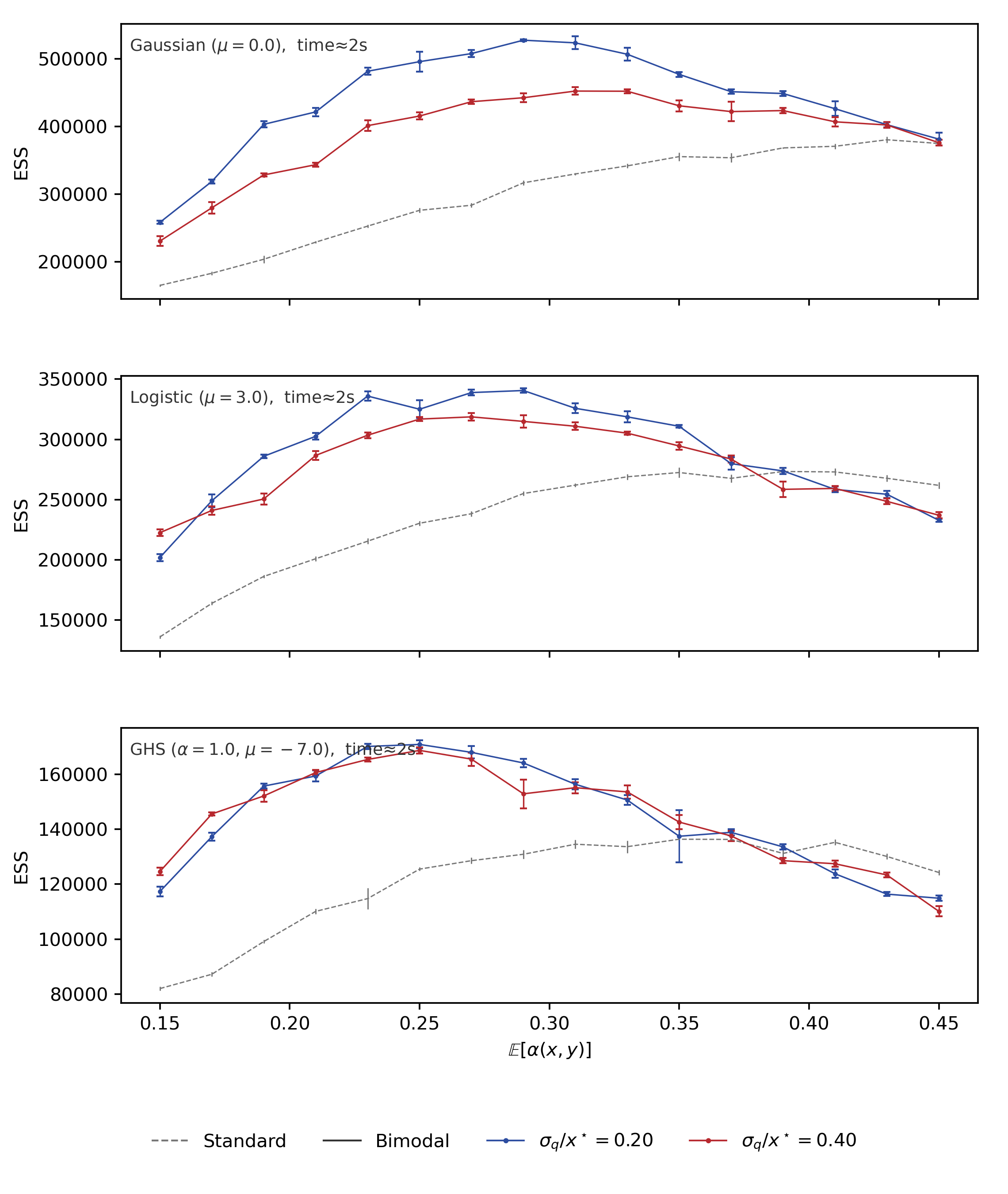}
  \caption{ESS versus $\mathbb{E}[\alpha(x,y)]$ under matched wall-clock time (2s). 
Dashed gray shows standard Gaussian proposals; solid lines show bimodal proposals with $\sigma_q/x^\star=0.20,0.40$.}
  \label{fig:ESS_vs_Ealpha}
\end{figure}
To test the efficiency of symmetric random walk proposal density associated with symmetric bimodal density \(\phi(\cdot)\). We conduct numerical experiments with three symmetric, unimodal target families: (i) the Gaussian, (ii) the logistic, and (iii) the generalized hyperbolic secant (GHS). Their probability density functions are, respectively,
\begin{align}
\pi_{\mathrm{g}}(x| \mu)
&= \frac{1}{\sqrt{2\pi}}\exp\!\left(-\frac{(x-\mu)^2}{2}\right),\\[4pt]
\pi_{\mathrm{ell}}(x| \mu)
&= \frac{e^{-(x-\mu)}}{\bigl(1+e^{-(x-\mu)}\bigr)^2}
= \tfrac{1}{4}\,\operatorname{sech}^2\!\Bigl(\tfrac{x-\mu}{2}\Bigr),\\[4pt]
\pi_{\mathrm{ghs}}(x| \alpha,\mu,\sigma)
&= \frac{c_\alpha}{\sigma}\,
   \operatorname{sech}^{\alpha}\!\Bigl(\frac{\pi\, (x-\mu)}{2\sigma}\Bigr),\,\alpha,\sigma>0,
\end{align}
where \(\operatorname{sech}(z)=1/\cosh(z)\) and the normalizing constant \(c_\alpha\) depends only on \(\alpha\) and admits the closed form
\[
c_\alpha
= \frac{\sqrt{\pi}}{2}\,
  \frac{\Gamma\!\bigl(\tfrac{\alpha+1}{2}\bigr)}{\Gamma\!\bigl(\tfrac{\alpha}{2}\bigr)}
= \left(\frac{2}{\pi}\int_{-\infty}^{\infty}\operatorname{sech}^{\alpha}(t)\,\mathrm{d}t\right)^{-1}.
\]
All three families are symmetric about $\mu$ and unimodal. For the Gaussian and logistic targets with unit scale (without loss of generality), the cumulative distribution functions are $\Phi(x-\mu)$ and $(1+e^{-(x-\mu)})^{-1}$, respectively, while the GHS cdf has no closed form and is evaluated numerically. Unless otherwise specified, we set the Gaussian variance to $1$, the logistic scale to $1$, and $\sigma=1,\,\alpha=1$ for the GHS family.
\begin{figure}[htbp]
  \centering
  \includegraphics[width=\linewidth]{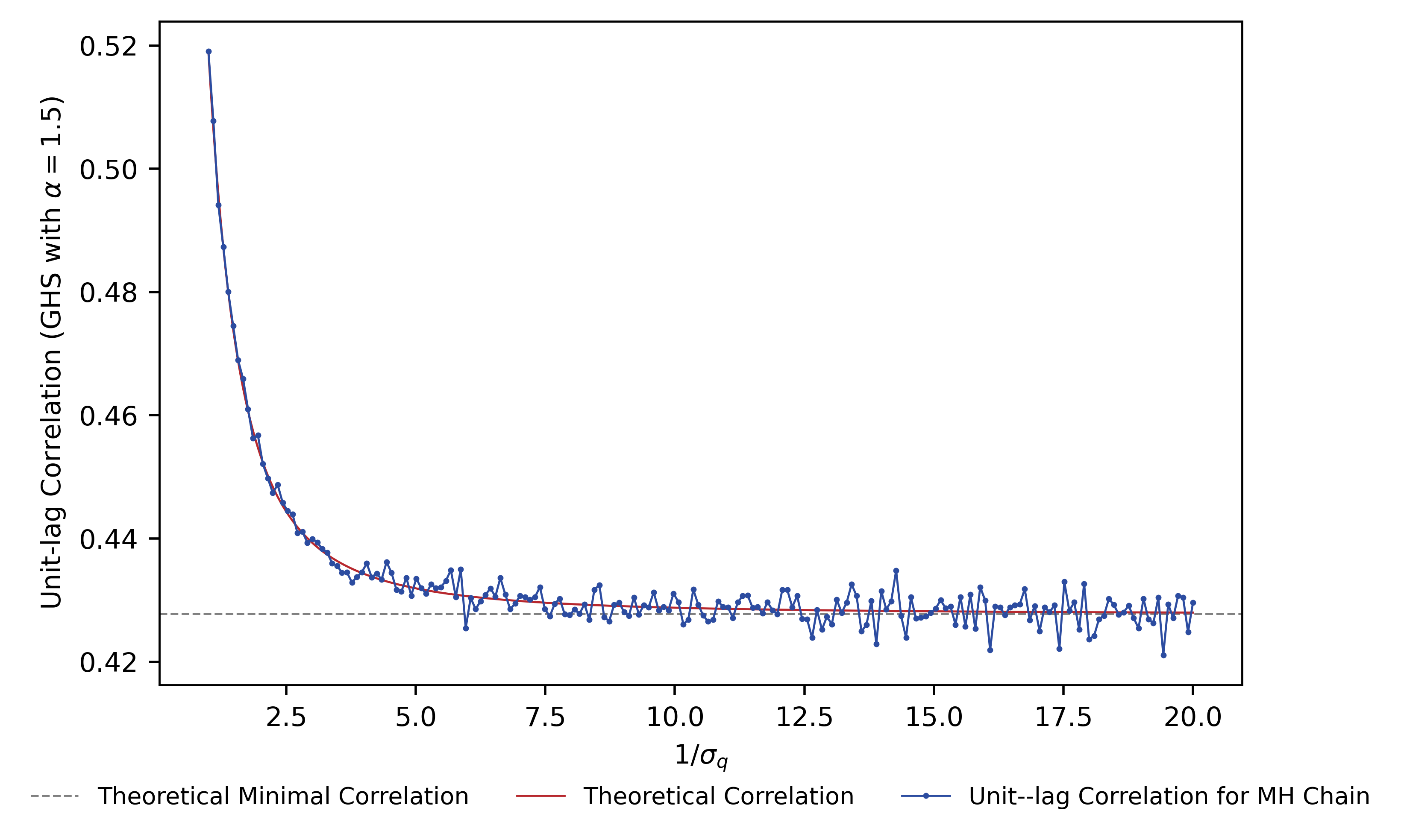}
  \caption{Unit-lag correlation decreases with narrower bimodal proposals, approaching the two-point limit—the unattainable minimum under any continuous proposal density.}
  \label{fig:covariance_vs_sigma}
\end{figure}
For the proposal distribution, we consider two cases: a Gaussian–component symmetric bimodal density 
\(\phi(x)=\tfrac12\mathcal{N}(x^*,\sigma_q^2)+\tfrac12\mathcal{N}(-x^*,\sigma_q^2)\) with \(\sigma_q>0\), 
and a standard Gaussian symmetric unimodal density. To ensure a fair comparison, we evaluate each MH chain in terms of effective sample size (ESS) and unit-lag autocorrelation, matched at the same (or nearly the same) average acceptance rate and running time (In practice, we fixed the wall-clock budget). The ESS is defined as 
\[
\mathrm{ESS}:=\frac{N}{1+2\sum_{k=1}^\infty \rho_k},
\]
where $N$ is the total number of samples and $\rho_k$ is the lag-$k$ autocorrelation \cite{DefofESS}. ESS measures the number of independent samples that the correlated MCMC output is equivalent to, thereby quantifying the impact of autocorrelation on estimation efficiency. Across target densities, the bimodal design typically yields higher ESS and lower unit-lag autocorrelation than the standard Gaussian proposal.

To ensure a fair comparison across different average acceptance rates, we applied a global scaling factor to the proposal distributions. Specifically, for the bimodal proposal, the ratio $\sigma_q/x^*$ was held fixed to preserve the bimodal geometry, while a scalar tuning parameter was adjusted numerically to achieve the target average acceptance rate. This ensures that the performance differences observed in Figure \ref{fig:ESS_vs_Ealpha} are attributable to the distributional shape rather than suboptimal variance scaling.


In Figure~1, we report the effective sample size for three target densities (Gaussian, Logistic, and GHS) under matched average acceptance probability and a runtime of 2 seconds. For the bimodal proposal, two cases with $\sigma_q/x^* = 0.20$ and $\sigma_q/x^* = 0.40$ were tested. In both cases, the narrow bimodal Gaussian yields substantially higher ESS and thus greater sampling efficiency than the standard Gaussian, consistent with concentrating proposal mass near the optimal jumps at $\pm 2x^*$, where $x^*$ maximizes $w(x)$. This advantage, however, diminishes as $\mathrm{E}[\alpha(x,y)]$ approaches $50\%$, since in this regime the chain already accepts nearly half of the proposed moves, and the marginal benefit of concentrating proposal mass near the optimal jumps becomes negligible.



\begin{figure}[htbp]
  \centering
  \includegraphics[width=\linewidth]{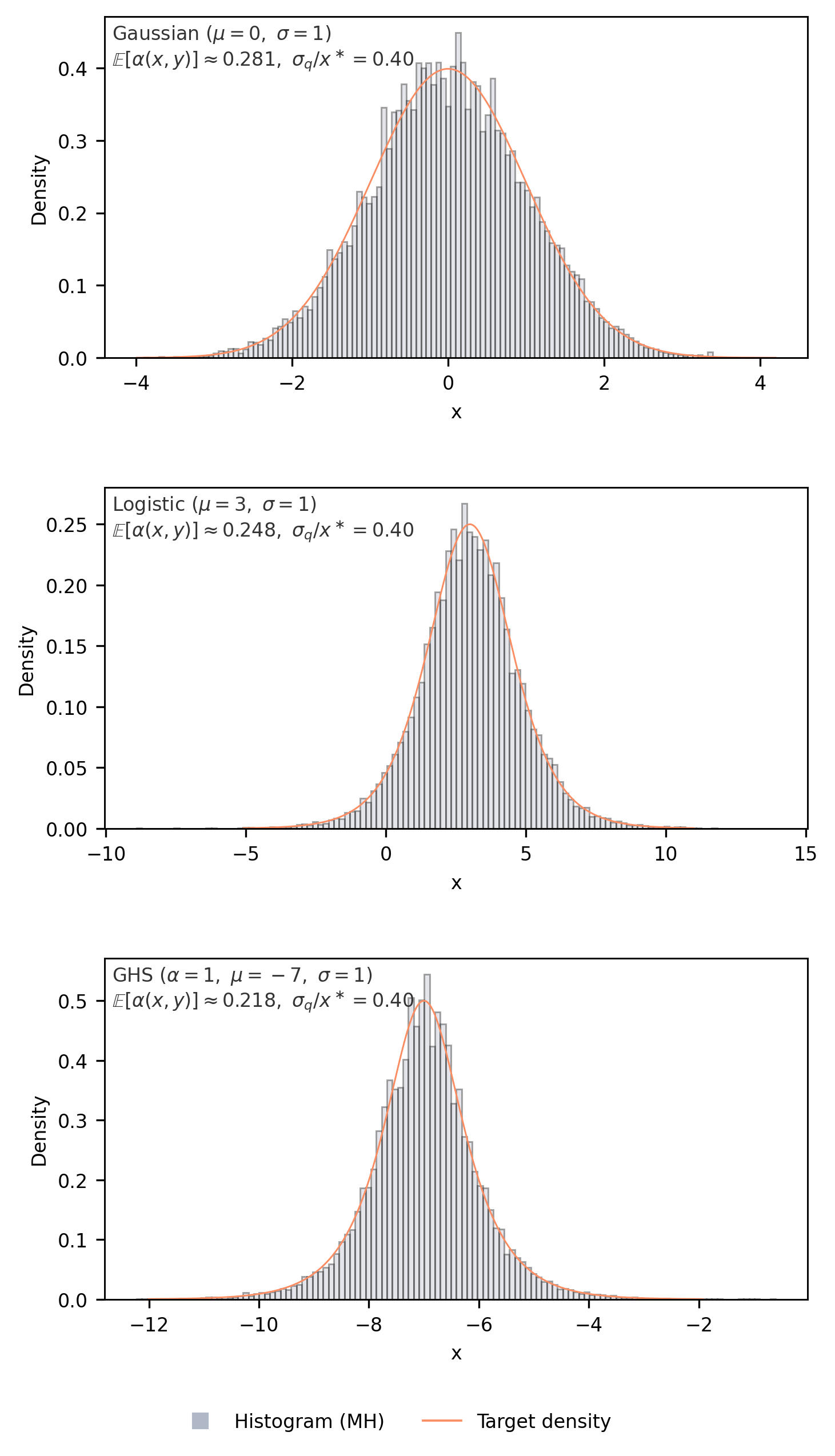}
  \caption{Histograms of MH samples versus target densities for Gaussian $(\mu=0,\sigma=1)$, Logistic $(\mu=3,\sigma=1)$, and GHS $(\alpha=1,\mu=-7,\sigma=1)$. All chains ran for $\approx1$s with bimodal proposal $(\sigma_q/x^\ast=0.40)$.}
  \label{fig:hist_vs_targets}
\end{figure}

In Figure~2, consider the GHS target density with $\mu=0$, variance $1$, and shape parameter $\alpha=1.5$, the theoretical unit-lag correlation (equal here to the unit-lag covariance) converges, as $1/\sigma_q$ increases,\,i.e.\,\,\(\sigma_q\) decreases, to the minimum value obtained by substituting $\phi^*$ into equation~\eqref{formula sym unimodal}. The empirical unit-lag autocorrelation of the MH chain closely tracks the theoretical curve and fluctuates around this minimum, as seen in Figure~2. These oscillations reflect numerical variation, which is more pronounced for small $\sigma_q$ due to higher sample correlation and lower effective sample size.


In Figure~3, we present histograms of MH samples against the true target densities for three cases (Gaussian, Logistic, and GHS) under the $\sigma_q/x^*=0.40$. The close alignment between the empirical histograms and the target curves confirms that the MH algorithm using the narrow bimodal density \(\phi\) accurately recovers the stationary distribution,\,i.e.\,the target distribution.

\section{Concluding Remarks and Future Work}
Understanding the autocovariance structure of the Metropolis–-Hastings algorithm is essential for assessing its efficiency. In this paper, we presented new theoretical results for random-walk MH chains targeting symmetric unimodal densities and introduced an optimal proposal design based on the unit-lag covariance, ultimately relating the analysis to high-dimensional settings in an asymmetric sense. Future work will focus on relaxing the restrictive assumptions on target distributions and extending the framework to multi-lag covariance structures.

\section*{Acknowledgment}
The author thanks Dr.\,James\,A.\,Fill for an earlier proof of Lemma~\ref{leq 0.5} and for helping us to formulate the statements and proofs of the results in Section 3 (especially Theorems 3.4--3.5).
\balance
\bibliographystyle{IEEEtran}
\bibliography{root}

@ARTICLE{HillandSpall,
  author={Hill, Stacy D. and Spall, James C.},
  journal={IEEE Control Systems Magazine}, 
  title={Stationarity and Convergence of the {M}etropolis-{H}astings Algorithm: Insights into Theoretical Aspects}, 
  year={2019},
  volume={39},
  number={1},
  pages={56-67},
  keywords={Convergence;Proposals;Markov processes;Monte Carlo methods;Density functional theory;Biological system modeling;Estimation},
  doi={10.1109/MCS.2018.2876959}}

@article{RobertsandGelman,
 ISSN = {10505164},
 abstract = {This paper considers the problem of scaling the proposal distribution of a multidimensional random walk Metropolis algorithm in order to maximize the efficiency of the algorithm. The main result is a weak convergence result as the dimension of a sequence of target densities, n, converges to ∞. When the proposal variance is appropriately scaled according to n, the sequence of stochastic processes formed by the first component of each Markov chain converges to the appropriate limiting Langevin diffusion process. The limiting diffusion approximation admits a straightforward efficiency maximization problem, and the resulting asymptotically optimal policy is related to the asymptotic acceptance rate of proposed moves for the algorithm. The asymptotically optimal acceptance rate is 0.234 under quite general conditions. The main result is proved in the case where the target density has a symmetric product form. Extensions of the result are discussed.},
 author = {G. O. Roberts and A. Gelman and W. R. Gilks},
 journal = {The Annals of Applied Probability},
 number = {1},
 pages = {110--120},
 publisher = {Institute of Mathematical Statistics},
 title = {Weak Convergence and Optimal Scaling of Random Walk {M}etropolis Algorithms},
 volume = {7},
 year = {1997}
}

@article{Tierney1994,
 ISSN = {00905364, 21688966},
 abstract = {Several Markov chain methods are available for sampling from a posterior distribution. Two important examples are the Gibbs sampler and the Metropolis algorithm. In addition, several strategies are available for constructing hybrid algorithms. This paper outlines some of the basic methods and strategies and discusses some related theoretical and practical issues. On the theoretical side, results from the theory of general state space Markov chains can be used to obtain convergence rates, laws of large numbers and central limit theorems for estimates obtained from Markov chain methods. These theoretical results can be used to guide the construction of more efficient algorithms. For the practical use of Markov chain methods, standard simulation methodology provides several variance reduction techniques and also give guidance on the choice of sample size and allocation.},
 author = {Luke Tierney},
 journal = {The Annals of Statistics},
 number = {4},
 pages = {1701--1728},
 publisher = {Institute of Mathematical Statistics},
 title = {Markov Chains for Exploring Posterior Distributions},
 volume = {22},
 year = {1994}
}

@article{RobertsandRosenthal,
author = {Gareth O. Roberts and Jeffrey S. Rosenthal},
title = {{General state space Markov chains and MCMC algorithms}},
volume = {1},
journal = {Probability Surveys},
number = {none},
publisher = {Institute of Mathematical Statistics and Bernoulli Society},
pages = {20 -- 71},
year = {2004},
doi = {10.1214/154957804100000024},
}

@article{BrooksandGareth1998,
  author    = {Stephen P. Brooks and Gareth O. Roberts},
  title     = {Convergence assessment techniques for {M}arkov chain {M}onte {C}arlo},
  journal   = {Statistics and Computing},
  volume    = {8},
  number    = {4},
  pages     = {319--335},
  year      = {1998},
  doi       = {10.1023/A:1008820505350},
}

@ARTICLE{Spall2003,
  author={Spall, James C.},
  journal={IEEE Control Systems Magazine}, 
  title={Estimation via {M}arkov chain {M}onte {C}arlo}, 
  year={2003},
  volume={23},
  number={2},
  pages={34-45},
  keywords={Monte Carlo methods;Bayesian methods;Sampling methods;Physics;Power generation;Probability;Stochastic processes;Terminology;Statistics;Books},
  doi={10.1109/MCS.2003.1188770}}

@article{DefofESS,
 ISSN = {00031305, 15372731},
 abstract = {Markov chain Monte Carlo (MCMC) methods make possible the use of flexible Bayesian models that would otherwise be computationally infeasible. In recent years, a great variety of such applications have been described in the literature. Applied statisticians who are new to these methods may have several questions and concerns, however: How much effort and expertise are needed to design and use a Markov chain sampler? How much confidence can one have in the answers that MCMC produces? How does the use of MCMC affect the rest of the model-building process? At the Joint Statistical Meetings in August, 1996, a panel of experienced MCMC users discussed these and other issues, as well as various "tricks of the trade". This article is an edited recreation of that discussion. Its purpose is to offer advice and guidance to novice users of MCMC-and to not-so-novice users as well. Topics include building confidence in simulation results, methods for speeding and assessing convergence, estimating standard errors, identification of models for which good MCMC algorithms exist, and the current state of software development.},
 author = {Robert E. Kass and Bradley P. Carlin and Andrew Gelman and Radford M. Neal},
 journal = {The American Statistician},
 number = {2},
 pages = {93--100},
 publisher = {[American Statistical Association, Taylor & Francis, Ltd.]},
 title = {Markov Chain {M}onte {C}arlo in Practice: A Roundtable Discussion},
 volume = {52},
 year = {1998}
}

@article{Godsill2001,
  author    = {Godsill, Simon and Doucet, Arnaud and West, Mike},
  title     = {Maximum a Posteriori Sequence Estimation Using {M}onte {C}arlo Particle Filters},
  journal   = {Annals of the Institute of Statistical Mathematics},
  year      = {2001},
  volume    = {53},
  number    = {1},
  pages     = {82--96},
  doi       = {10.1023/A:1017968404964},
  issn      = {1572-9052},
  abstract  = {We develop methods for performing maximum a posteriori (MAP) sequence estimation in non-linear non-{G}aussian dynamic models. The methods rely on a particle cloud representation of the filtering distribution which evolves through time using importance sampling and resampling ideas. MAP sequence estimation is then performed using a classical dynamic programming technique applied to the discretised version of the state space. In contrast with standard approaches to the problem which essentially compare only the trajectories generated directly during the filtering stage, our method efficiently computes the optimal trajectory over all combinations of the filtered states. A particular strength of the method is that MAP sequence estimation is performed sequentially in one single forwards pass through the data without the requirement of an additional backward sweep. An application to estimation of a non-linear time series model and to spectral estimation for time-varying autoregressions is described.}
}

@article{Golightly2005,
 ISSN = {0006341X, 15410420},
 abstract = {This article is concerned with the Bayesian estimation of stochastic rate constants in the context of dynamic models of intracellular processes. The underlying discrete stochastic kinetic model is replaced by a diffusion approximation (or stochastic differential equation approach) where a white noise term models stochastic behavior and the model is identified using equispaced time course data. The estimation framework involves the introduction of m - 1 latent data points between every pair of observations. MCMC methods are then used to sample the posterior distribution of the latent process and the model parameters. The methodology is applied to the estimation of parameters in a prokaryotic autoregulatory gene network.},
 author = {A. Golightly and D. J. Wilkinson},
 journal = {Biometrics},
 number = {3},
 pages = {781--788},
 publisher = {[Wiley, International Biometric Society]},
 title = {Bayesian Inference for Stochastic Kinetic Models Using a Diffusion Approximation},
 volume = {61},
 year = {2005}
}

@article{Frigola2013,
author = {Frigola, Roger and Lindsten, Fredrik and Schön, Thomas and Rasmussen, Carl},
year = {2013},
month = {06},
pages = {},
title = {Bayesian Inference and Learning in Gaussian Process State-Space Models with Particle {MCMC}},
journal = {Advances in Neural Information Processing Systems}
}

@article{Ninness2010,
title = {Bayesian system identification via {M}arkov chain {M}onte {C}arlo techniques},
journal = {Automatica},
volume = {46},
number = {1},
pages = {40-51},
year = {2010},
issn = {0005-1098},
doi = {https://doi.org/10.1016/j.automatica.2009.10.015},
author = {Brett Ninness and Soren Henriksen},
keywords = {Parameter estimation, System identification, Bayesian methods, Maximum likelihood},
abstract = {The work here explores new numerical methods for supporting a Bayesian approach to parameter estimation of dynamic systems. This is primarily motivated by the goal of providing accurate quantification of estimation error that is valid for arbitrary, and hence even very short length data records. The main innovation is the employment of the Metropolis–Hastings algorithm to construct an ergodic Markov chain with invariant density equal to the required posterior density. Monte Carlo analysis of samples from this chain then provides a means for efficiently and accurately computing posteriors for model parameters and arbitrary functions of them.}
}

@article{Tonelli,
  author    = {A. Mukherjea},
  title     = {A remark on {T}onelli's theorem on integration in product spaces},
  journal   = {Pacific Journal of Mathematics},
  volume    = {42},
  number    = {1},
  pages     = {177--185},
  year      = {1972},
  publisher = {Pacific Journal of Mathematics},
}

@INPROCEEDINGS{Srinivasan,
  author={Srinivasan, Anand and Takeishi, Naoya},
  booktitle={2020 59th IEEE Conference on Decision and Control (CDC)}, 
  title={An {MCMC} Method for Uncertainty Set Generation via Operator-Theoretic Metrics}, 
  year={2020},
  volume={},
  number={},
  pages={2714-2719},
  keywords={Uncertainty;Kernel;Measurement;Computational modeling;Nonlinear dynamical systems;Trajectory;Mathematical model},
  doi={10.1109/CDC42340.2020.9304222}}

@INPROCEEDINGS{Moriarty,
  author={Moriarty, John and Vogrinc, Jure and Zocca, Alessandro},
  booktitle={2018 IEEE Conference on Decision and Control (CDC)}, 
  title={Frequency violations from random disturbances: an {MCMC} approach}, 
  year={2018},
  volume={},
  number={},
  pages={1598-1603},
  keywords={Time-frequency analysis;Proposals;Power system stability;Generators;Frequency control;Stability analysis;Mathematical model},
  doi={10.1109/CDC.2018.8619304}}

@INPROCEEDINGS{Zhao,
  author={Zhao, Zinan and Kumar, Mrinal},
  booktitle={2014 American Control Conference}, 
  title={A {MCMC}/{B}ernstein approach to chance constrained programs}, 
  year={2014},
  volume={},
  number={},
  pages={4318-4323},
  keywords={Approximation methods;Optimization;Probabilistic logic;Random variables;Vectors;Monte Carlo methods;Portfolios;Optimization algorithms;Optimization;Randomized algorithms},
  doi={10.1109/ACC.2014.6859159}}

@inbook{Horn_Johnson_2012,
  author    = {Horn, Roger A. and Johnson, Charles R.},
  title     = {Matrix Analysis},
  edition   = {2},
  publisher = {Cambridge University Press},
  year      = {2012},
  pages     = {258--261},
  address   = {Cambridge}
}

@article{Metropolis1953,
author = {Siddhartha Chib and Edward Greenberg},
title = {Understanding the {M}etropolis-{H}astings Algorithm},
journal = {The American Statistician},
volume = {49},
number = {4},
pages = {327--335},
year = {1995},
publisher = {ASA Website},
doi = {10.1080/00031305.1995.10476177},
eprint = {       https://www.tandfonline.com/doi/pdf/10.1080/00031305.1995.10476177
}
}

@article{stein,
author = {Lin, Wu and Khan, Mohammad and Schmidt, Mark},
year = {2019},
month = {10},
pages = {},
title = {Stein's Lemma for the Reparameterization Trick with Exponential Family Mixtures},
doi = {10.48550/arXiv.1910.13398}
}
\end{document}